\documentclass[letterpaper, 10 pt, conference]{ieeeconf}  

\IEEEoverridecommandlockouts                              



\usepackage{amsthm}
\newtheorem{definition}{Definition}

\newtheorem{problem}{Problem}
\newtheorem{lemma}{Lemma}

\newtheorem{remark}{Remark}

\newtheorem{theorem}{Theorem}

\usepackage{graphics} 
\usepackage{epsfig} 
\usepackage{times} 
\usepackage{amsmath} 
\usepackage{amssymb}  
\usepackage{subcaption}
\usepackage{todonotes}
\usepackage{algorithmic}
\usepackage{algorithm}
\usepackage{xcolor}
\usepackage{cite}
\usepackage[absolute,overlay]{textpos} 

\newcommand{\Sim}{\mathsf{Sim}}

\title{\LARGE \bf
Certified Stochastic Control via Covariance Steering with Pick-to-Learn
}

\author{Chun-Wei Kong, Zachary Donovan, Morteza Lahijanian, Jay McMahon
\thanks{Authors are with the Dept. of Aerospace Engineering Sciences, University of Colorado Boulder, USA {\tt\small \{firstname.lastname\}\allowbreak @\allowbreak colorado.edu}}
}

\begin{document}

\begin{textblock*}{3.5in}(0.3in,0.2in)
    \raggedright
    \footnotesize
    To appear in the 65th IEEE Conference on Decision and Control.
\end{textblock*}

\maketitle
\thispagestyle{empty}
\pagestyle{empty}

\begin{abstract}
We present CS-P2L, a framework coupling covariance steering (CS) with the Pick-to-Learn (P2L) meta-algorithm for certified controller synthesis over high-fidelity stochastic simulators. 
The method iteratively evaluates policies on simulator rollouts, tightens surrogate constraints using the worst-case violations, and provides compression-based probabilistic guarantees on the true violation probability given a confidence level. 
On a spacecraft powered-descent problem with uncertain gravity, CS-P2L certifies a violation bound of 4.9\% with 600 rollouts, whereas standalone covariance steering underestimates the violation rate by roughly a factor of two.

\end{abstract}

\section{INTRODUCTION}
Safety-critical controller design often relies on high-fidelity simulators as the primary means of capturing complex dynamics and environmental uncertainty. 
The richer and more expressive the simulator, the more faithfully it captures deployment conditions; yet this very expressiveness makes it intractable to synthesize controllers with formal guarantees directly over the simulator. 
In practice, policies are typically designed on a simplified surrogate model and validated by Monte Carlo (MC) simulation, which yields only an empirical estimate of the violation probability and offers no principled mechanism for translating MC failures into policy improvements.
To bridge this gap, we propose a surrogate-based, simulator-driven control synthesis framework that provides formal guarantees.

Several lines of work focus on this synthesis-validation gap. 
The scenario approach~\cite{calafiore2006scenario} provides 
probably approximately correct (PAC) guarantees (probabilistic bounds with confidence) but requires convexity in the decision variables for every scenario. 
Iterative scenario optimization~\cite{zagorowska2024automatic} constructs worst-case
scenario sets for robust control by embedding the dynamics model directly in the optimization, but requires solving the complex dynamics within each scenario subproblem.
Covariance steering (CS)~\cite{goldshtein2017finite,okamoto2019optimal,ridderhof2019nonlinear,ridderhof2022chance} synthesizes controllers that steer the state distribution under chance constraints via convex programming, but typically operated on linearized surrogates without accounting for surrogate--simulator mismatch. 

PAC-Bayes 
control~\cite{majumdar2021pacbayes} trains reinforcement learning policies across simulated environments and bounds the expected cost over novel environments.
Work \cite{hsu2023sim2lab2real} extend this framework to safety-aware policy distributions by combining Hamilton-Jacobi reachability shielding with PAC-Bayes generalization, providing lower bounds on both expected performance and safety across unseen environments. 
In both cases, however, the guarantees bound the expected value of a cost or safety indicator averaged over both the environment distribution and the policy distribution, rather than certifying the violation probability of a single returned policy with respect to the simulator.
Moreover, obtaining tight bounds requires careful design of both the prior
distribution and the reward structure~\cite{boroujeni2024pac}.

Neural supermartingale certificates~\cite{neustroev2025neural,kong2026hard} provide formal reach-avoid guarantees for continuous-time SDEs by (jointly) training neural controllers and certificates.  They, however, require state space discretization, which scales exponentially with state dimension.
Typically, the synthesized controllers are
continuous-time feedback laws rather than zero-order-hold policies applicable for digital control systems. 

Recently proposed Pick-to-Learn (P2L) meta-algorithm~\cite{paccagnan2023pick, paccagnan2025pick} equips any data-driven algorithm with tight generalization bounds by iteratively selecting informative data points, using all available data for both synthesis and certification~\cite{langford2005tutorial}.
P2L has been demonstrated across reachability, optimal control, safe synthesis, and robust control~\cite{paccagnan2025pick}, but its application to simulator-based control synthesis for continuous-time stochastic systems, where the surrogate--simulator gap described above demands a tractable inner solver that can systematically incorporate simulator feedback, has not been explored.

In this work, we instantiate P2L for controller design on continuous-time stochastic systems, focusing on stochastic differential equations (SDEs) with partially known drift parameters under reach-avoid specifications.
To do so, we pair the outer P2L loop with covariance steering (CS) as a structured inner solver on a linearized surrogate.
The resulting Covariance-Steering Pick-to-Learn (CS-P2L) method synthesizes a
zero-order-hold, feedforward-feedback policy and certifies it with a probabilistic guarantee: with confidence at least~$1-\delta$, the true violation probability~$p(\pi)$ is at most~$\bar{\varepsilon}$, where
$\bar{\varepsilon}$ is determined by the number of simulator rollouts
that drive the policy update. 

At each iteration, CS-P2L evaluates the
current policy on a fixed set of simulator rollouts, identifies the most
severely violating trajectory, and uses it to tighten the surrogate
constraints---state trajectory bounds, control limits, and terminal
covariance---before re-solving the covariance steering problem. 
This closed-loop interaction between the high-fidelity simulator and the 
surrogate is the key mechanism that bridges this surrogate--simulator gap
while retaining tractability. 
Unlike the scenario approach, no convexity
in the original problem is required. 
Unlike iterative scenario
optimization, CS-P2L handles
continuous-time dynamics of arbitrary complexity, since the inner solver
operates on a lightweight linearized surrogate while the full nonlinear
dynamics are accessed only through simulator rollouts.
Unlike neural certificate methods, no state-space discretization is needed and the controller is directly implementable in digital systems; and unlike standalone CS, the outer P2L loop detects and corrects for surrogate--simulator mismatch through iterative rollout evaluation.

In summary, the contributions are fourfold:
\begin{enumerate}
\item[(i)] We formalize a class of continuous-time stochastic simulators
    with uncertain drift parameters over which control policies can be synthesized with certified violation probability bounds.
\item[(ii)] We develop CS-P2L, a tractable framework that leverages covariance steering as an inner solver to synthesize certified controllers for these simulators.
\item[(iii)] We propose a tightening strategy that reduces the certified  violation probability.
\item[(iv)] We demonstrate CS-P2L on a spacecraft powered descent problem
    with uncertain gravitational environments, showing that the method efficiently produces certified zero-order-hold control policies.
\end{enumerate}
\section{Problem Formulation}
\label{sec:problem}

We first formalize the notion of a simulator.

\begin{definition}[Simulator]
\label{def:simulator}
Let $j,k \in \mathbb{N}_0$ and
\begin{align*}
& t_0 < t_1 < \cdots < t_J = t_f, \\
& \tau_0 < \tau_1 < \cdots < \tau_K = t_f, \quad \{\tau_k\}_{k=0}^K \subseteq \{t_j\}_{j=0}^J.
\end{align*}
Furthermore, let \(X \subseteq \mathbb{R}^{n_x}\) be the state space, 
\(\mu_0 \in \mathcal{P}(X)\) be the initial-state distribution with finite covariance, and 
\(\Lambda \subset \mathbb{R}^{n_\lambda}\) be a compact set of time-invariant
uncertain parameters with a probability measure
\(\mu_\Lambda \in \mathcal{P}(\Lambda)\). For each rollout, the simulator draws
\(
x_0 \sim \mu_0,\)
\(\lambda \sim \mu_\Lambda,
\)
keeps \(\lambda\) fixed on time interval \([t_0,t_f]\), and generates the state process
\(x(\cdot)\) from the stochastic differential equation (SDE)
\[
dx_t
=
f(x_t,u_t,t;\lambda)\,dt
+
G\,dw_t,
\qquad t\in[t_0,t_f],
\]
where \(w_t \in \mathbb{R}^{n_w}\) is an $n_w$-vector \textit{standard} Brownian motion process with
\(\mathbb{E}[dw_t dw_t^\top]=I_{n_w}dt\), and
$G \in \mathbb{R}^{n_x \times n_w}$ is a constant diffusion matrix.
Control is applied by zero-order hold under a sampled policy
\(\pi=\{\pi_k\}_{k=0}^{K-1}\in\Pi\):
\[
u_t=\pi_k\bigl(x(\tau_k)\bigr) \in \mathbb{R}^{n_u},
\quad
t\in[\tau_k,\tau_{k+1}),
\quad
k=0,\dots,K-1.
\]
We denote the underlying random realization by
\[
z = (x_0, \lambda, \{w_t\}_{t \in [t_0,t_f]}),\] which is drawn from
the joint distribution of $\mu_0,\mu_{\Lambda}$ and the Brownian motion
, and write
$\Sim_{\pi}(z)$ for the simulated state-control trajectory
\[
\Sim_{\pi}(z)
:=
\bigl(x(t_0),\dots,x(t_J)\bigr) \times \bigl(u_{0},\dots,u_{K-1}\bigr)
\in \mathcal{T},
\]
where $\mathcal{T} := (\mathbb{R}^{n_x})^ {(J+1)} \times
(\mathbb{R}^{n_u})^{K}$ 
is the state-control trajectory space.
\end{definition}

One rollout of the simulator samples an initial condition and a
time-invariant uncertain parameter, propagates the continuous-time
stochastic dynamics under a zero-order-hold
controller, and returns the state trajectory on the fine
grid~$\{t_j\}_{j=0}^J$ together with the controls on the coarser
grid~$\{\tau_k\}_{k=0}^K$. The policy is indexed over
$k=0,\dots,K-1$ because a control applied at $\tau_K=t_f$ would
influence only states beyond the terminal time. We next establish the
probability measures induced by Simulator~\ref{def:simulator}.

\begin{definition}[Probability measures]
\label{def:probability_measures}
Let $z = (x_0, \lambda, \{w_t\}_{t \in [t_0,t_f]})$ be a random
realization as in Def.~\ref{def:simulator}. Denote by
$\mathbb{P}_z$ the probability measure induced by
$z$. Given $N$
i.i.d.\ realizations $Z_N = \{z^{(1)}, \dots, z^{(N)}\}$, denote by
$\mathbb{P}_{Z_N} := \mathbb{P}_z^{\otimes N}$ the corresponding
product measure.
\end{definition}

With these measures, we introduce the quantities needed to state
the control design problem. Let $\varphi$ be a specification on
state-control trajectories.
In this work, we focus on $\varphi$ for reach-avoid, but the framework could be extended to other tasks, such as safety-only or temporal logic specifications.

\begin{definition}[Reach-avoid specification]\label{def:ra_phi}
Given safe set $\mathcal{X}_s := \bigcap_{m=1}^{M}\{x \mid a_m^\top x \leq b_m\}$ and terminal set $\mathcal{X}_{t_f}:=\{x \in X \mid
(x-\mu_{t_f})^\top \Sigma_{t_f}^{-1}(x-\mu_{t_f}) \leq r_{t_f}^2\}$,
with $\Sigma_{t_f}\in\mathbb{S}_{++}^{n_x}$ and $r_{t_f}>0$ (representing an ellipsoid), and maximum control magnitude $u_{\max}>0$, the reach-avoid specification $\varphi$ is satisfied by a realization $z$ of Simulator~\ref{def:simulator} under controller $\pi$, denoted as $\Sim_{\pi}(z) \models \varphi$, if
\begin{align*}
  x(t_j) \in \mathcal{X}_s \land \;
  \|u_{k}\|_2 \leq u_{\max} \land \;
   x(t_f) \in \mathcal{X}_{t_f},\; \forall j,k.
\end{align*}
\end{definition}

By Def.~\ref{def:ra_phi}, we define the \emph{violation probability}
\begin{equation}
\label{eq:violation_prob}
p(\pi) := \mathbb{P}_{z}\!\bigl(\Sim_{\pi}(z) \not\models \varphi\bigr).
\end{equation}

Intuitively, $p(\pi)$ is the probability that the specification is violated when the controller is deployed and the simulator draws a fresh realization. 
In general, $p(\pi)$ cannot be evaluated in closed
form nor bounded analytically, because the simulator may involve complex dynamics (e.g.,
nonlinear SDEs with zero-order-hold control), non-Gaussian distributions, and uncertain parameters whose effects compound over the trajectory. 
Standard Monte Carlo estimation provides only a point estimate of~$p(\pi)$ with no principled mechanism to feed observed violations back into the policy. 
We therefore pursue a certification strategy in which the simulator is run $N$ times and the resulting realizations serve a dual purpose:
synthesizing a policy~$\pi$ and bounding its violation probability.

\begin{problem}[Control design with probabilistic specification]
\label{prob:control_design}
Given the simulator in Def.~\ref{def:simulator}, any set of $N$
i.i.d. realizations $Z_N$ as in
Def.~\ref{def:probability_measures}, a reach-avoid specification $\varphi$ (Def.~\ref{def:ra_phi}), a target violation probability $\varepsilon \in (0,1)$, and a confidence level $\delta \in (0,1)$, 
synthesize a policy
$\pi \in \Pi$ such that
\begin{equation}\label{eq:p2l_guarantee}
\mathbb{P}_{Z_N}\!\left[
p(\pi) \leq \varepsilon
\right]
\geq 1 - \delta.
\end{equation}
\end{problem}

Problem~\ref{prob:control_design} is formulated with two layers of probability because the \emph{inner} probability, $p(\pi)$, is
intractable to bound directly. 
The \emph{outer} probability, $\mathbb{P}_{Z_N}[\,\cdot\,] \geq 1-\delta$, certifies the design
process itself: over the randomness in the $N$ simulations from which both~$\pi$ and~$\varepsilon$ are derived, the event
$p(\pi) \leq \varepsilon$ holds with confidence at least $1-\delta$.

In the next section, we develop a framework to synthesize a control
policy and, given any confidence~$\delta$, compute an upper
bound~$\bar{\varepsilon}$ on its violation probability for reach-avoid
specifications. 
We then propose a strategy for
tightening~$\bar{\varepsilon}$ below~$\varepsilon$ in practice.

\section{Covariance Steering with the P2L Method}
\label{sec:cs_p2l}

We develop a CS-P2L synthesis framework that replaces the intractable controller design over the simulator with a linearized,
discretized surrogate on which covariance steering can be solved using successive convex programming. 
Because the surrogate enforces chance constraints
under a Gaussian assumption at discrete nodes, it \emph{cannot} certify the original nonlinear dynamics. 
The P2L loop closes this gap: it evaluates the synthesized policy on the underlying simulator, identifies the most severely violating realization, and reconfigures the surrogate to reduce that violation. 
The set that accumulates these counterexamples directly determines the violation probability bound~$\bar{\varepsilon}$ via compression theory~\cite{campi2023compression}.

We begin by formulating the surrogate covariance-steering problem whose
configuration parameters are associated with
Prob.~\ref{prob:control_design}.
We then present the CS-P2L algorithm
and its probabilistic guarantee, describe the design of violation measure and  configuration algorithm, and propose a
tightening strategy that seeks to reduce~$\bar{\varepsilon}$ by
iteratively running CS-P2L on a progressively enlarged dataset.

\subsection{Surrogate Covariance-steering Formulation}
\label{subsec:cs_surrogate}

The surrogate approximates the nonlinear SDE in
Simulator~\ref{def:simulator} with a linearized, discretized model at time points $\{\tau_k\}_{k=0}^K$.
Let $\mathcal{S} \subset \mathbb{R}^{n_x}$ denote the singular set of the Simulator SDE (Def.~\ref{def:simulator}), and let 
$f, g \in \mathcal{C}^1(\mathbb{R}^{n_x} \setminus \mathcal{S})$. Assume there exists a compact set 
$\mathcal{X} \subset \mathbb{R}^{n_x} \setminus \mathcal{S}$ 
containing all trajectories over $[t_0, t_f]$. 
Consequently, $f$ and $g$ are Lipschitz on $\mathcal{X}$, 
the SDE admits a unique strong solution on $[t_0, t_f]$ 
\cite{Okensendal_1992_SDE}, 
and the Jacobians are well-defined at each linearization point.

We parameterize the controller over coarse time points $\tau_k$, indexed by $k$, via a memoryless, affine control law 
\begin{equation}\label{eq:control_law}
    u_k = \bar{u}_k + K_k (x_k - \mu_k),
\end{equation}
where $\bar{u}_k\in\mathbb{R}^{n_u}$ is the feedforward control, $K_k\in\mathbb{R}^{n_u \times n_x}$ is the feedback gain matrix, and $\mu_k := \mathbb{E}[x_k]$ is the mean state.
Eq.~\eqref{eq:control_law} defines a zero-order-hold control, with $u_k$ constant over $[\tau_{k},\tau_{k+1})$.
Let $x^{\text{ref}} := \{ x_k^{\text{ref}} \}_{k=0}^{K}$ and $u^{\text{ref}} := \{ u_k^{\text{ref}} \}_{k=0}^{K-1}$ be a reference state-control sequence. The SDE is linearized about this state-control sequence, then discretized with the control law in Eq.~\eqref{eq:control_law}.
The discrete linear time-varying (LTV) stochastic system is 
\[
x_{k+1} = A_k x_k + B_k u_k + c_k + E_k \nu_k + G_k w_k,
\]
where $A_k\in\mathbb{R}^{n_x \times n_x}$, $B_k\in\mathbb{R}^{n_x \times n_u}$, $c_k\in\mathbb{R}^{n_x}$, $E_k\in\mathbb{R}^{n_x \times n_{\nu}}$, and $G_k\in\mathbb{R}^{n_x \times n_w}$ are the system matrices computed during the linearization and discretization process.
The \textit{virtual control} parameter $\nu\in\mathbb{R}^{n_\nu}$ is introduced in the linearization step to address \textit{artificial infeasiblity} of the linearized dynamics \cite{Mao_Szmuk_2019_SCvx, Malyuta_Reynolds_2022_TrajTutorial}. The dynamic's virtual control is penalized using the one-norm as $J_{\nu} = \sum_{k=0}^{K-1}\|E_k\nu_k\Delta\tau_k\|_1$ such that the trajectory at convergence is dynamically feasible. The noise of the discrete process $w_k\in\mathbb{R}^{n_w}$ is described by a \textit{standard} Gaussian process. Under the Brownian process assumption and given the linearized dynamics, the state evolution is decomposed into the evolution of its mean and covariance as 
\begin{subequations}
\begin{align}
\mu_{k+1} &= A_k \mu_k + B_k \bar{u}_k + c_k + E_k \nu_k, \label{eq:dt_mean} \\
P_{k+1}   &= (A_k + B_k K_k) P_k (A_k + B_k K_k)^\top + G_k G_k^\top.\label{eq:dt_covariance}
\end{align}
\end{subequations}
The covariance evolution in~\eqref{eq:dt_covariance} is bilinear in the decision variables $P_k$ and $K_k$ but can be transcribed into an affine equality constraint as described in Lemma \ref{lem:convex_state_covariance}. 

\begin{lemma}[Convex state covariance reformulation \cite{Liu_Rapakoulias_2024_OptCovSteering_DT}]
\label{lem:convex_state_covariance}
Consider the change of variables $U_k=K_k P_k$ and $Y_k=U_kP_k^{-1}U_k^\top$ where $U_k\in\mathbb{R}^{n_u \times n_x}$ and $Y_k\in\mathbb{S}_{++}^{n_u}$. Eq.~\eqref{eq:dt_covariance} can be equivalently expressed as 
\begin{align}
\label{eq:dt_cvx_covariance}
P_{k+1} &= A_k P_k A_k^\top 
          + A_k U_k^\top B_k^\top 
          + B_k U_k A_k^\top \nonumber \\
        &\quad + B_k Y_k B_k^\top 
          + G_k G_k^\top, 
\end{align}
where
$\begin{bmatrix}
    P_k & U_k^\top \\
    U_k & Y_k
\end{bmatrix} \succeq 0.$
\end{lemma}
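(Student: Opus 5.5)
The plan is to substitute the change of variables directly into the bilinear covariance recursion, and then explain how the nonconvex identity $Y_k = U_k P_k^{-1} U_k^\top$ becomes the linear matrix inequality through a Schur complement. Throughout, I assume $P_k \succ 0$, which holds for $k\ge 1$ when $G_kG_k^\top$ is nondegenerate and for $k=0$ since $\mu_0$ has positive-definite covariance. Under that assumption $P_k^{-1}$ exists and $K_k = U_k P_k^{-1}$ is well defined. The correspondence between $(K_k,P_k)$ and $(U_k,P_k)$ is therefore a bijection.

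First I expand \eqref{eq:dt_covariance}:
\begin{equation*}
(A_k + B_k K_k) P_k (A_k + B_k K_k)^\top = A_k P_k A_k^\top + A_k P_k K_k^\top B_k^\top + B_k K_k P_k A_k^\top + B_k K_k P_k K_k^\top B_k^\top .
\end{equation*}
Since $P_k$ is symmetric, $P_k K_k^\top = (K_k P_k)^\top = U_k^\top$. The quadratic term is
\begin{equation*}
K_k P_k K_k^\top = K_k P_k P_k^{-1} P_k K_k^\top = U_k P_k^{-1} U_k^\top = Y_k .
\end{equation*}
Substituting these gives exactly \eqref{eq:dt_cvx_covariance}, which is affine in $(P_k,U_k,Y_k)$. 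This shows equivalence as long as the equality $Y_k = U_kP_k^{-1}U_k^\top$ is imposed.

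Next, I invoke the Schur complement lemma. With $P_k\succ 0$, the block matrix $\begin{bmatrix} P_k & U_k^\top\\ U_k & Y_k\end{bmatrix}$ is positive semidefinite if and only if $Y_k - U_kP_k^{-1}U_k^\top \succeq 0$. So the LMI encodes the relaxation $Y_k \succeq U_kP_k^{-1}U_k^\top$ rather than the equality.

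The main obstacle is justifying that this relaxation still deserves the word ``equivalently.'' My approach is to note that any feasible $Y_k$ can be written as $Y_k = U_kP_k^{-1}U_k^\top + \Delta_k$ with $\Delta_k\succeq 0$. Then \eqref{eq:dt_cvx_covariance} gives the true closed-loop covariance plus the extra term $B_k\Delta_kB_k^\top\succeq 0$. This term is equivalent to additional injected noise, so the relaxed $P_{k+1}$ upper-bounds, in the Loewner order, the covariance produced by the recovered gain $K_k=U_kP_k^{-1}$.

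Following \cite{Liu_Rapakoulias_2024_OptCovSteering_DT}, I would then argue that the relaxation is lossless for the covariance-steering program. The chance constraints and terminal covariance constraint are monotone in $P$, and the cost is increasing in $Y_k$ through the control-covariance term $\mathrm{tr}(Y_k)$. Any optimizer with $\Delta_k\neq 0$ could therefore be improved or matched by setting $\Delta_k=0$, so at optimality the LMI is tight and the original bilinear recursion is recovered exactly.

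If exact tightness is delicate, for instance when $B_k$ has a nontrivial null space or the cost is flat in $Y_k$, I would fall back on a weaker claim. It still suffices for the lemma's use: every feasible point of the relaxed problem yields a policy whose true covariance is dominated by the computed one, so the constraints remain conservatively satisfied.
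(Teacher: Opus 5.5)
Your substitution and Schur-complement steps are correct, and they are all the statement literally asserts. With $Y_k$ defined by $Y_k=U_kP_k^{-1}U_k^\top$, the recursion \eqref{eq:dt_cvx_covariance} is an identity and the block matrix is automatically positive semidefinite.

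The paper gives no proof and only cites \cite{Liu_Rapakoulias_2024_OptCovSteering_DT}. The real content there is that the LMI encodes only $Y_k\succeq U_kP_k^{-1}U_k^\top$, yet this relaxation is lossless. That source, as I recall it, fixes the terminal covariance by an equality, so it cannot simply shrink covariances and argues through optimality conditions of the relaxed SDP.

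Your route is a more elementary primal improvement argument. It exploits three features of this particular surrogate:
\begin{itemize}
\item the inequality $P_K\preceq P_{t_f}$ in \eqref{eq:boundary_conditions};
\item the monotonicity of \eqref{eq:cvx_state_constraint} and \eqref{eq:cvx_control_constraint} in $P_k$ and $Y_k$;
\item the term $w_u\,\mathrm{tr}(Y_k)\Delta\tau_k$ in the objective.
\end{itemize}
It would not carry over to an equality terminal constraint. Here, however, it is valid, and it also gives directly the practically relevant conservativeness statement for the recovered gains.

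One step needs repair. Setting $\Delta_k=0$ while holding the later variables fixed is not feasible in general. It lowers $P_{k+1}$ by $B_k\Delta_kB_k^\top$, which raises $U_{k+1}P_{k+1}^{-1}U_{k+1}^\top$ and can break the LMI at step $k+1$.

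Build the competitor from your own domination paragraph instead:
\begin{itemize}
\item keep the means, slacks and $\zeta_k$, and fix $K_j=U_jP_j^{-1}$;
\item propagate $\tilde P_0=P_0$ and $\tilde P_{j+1}=(A_j+B_jK_j)\tilde P_j(A_j+B_jK_j)^\top+G_jG_j^\top$;
\item set $\tilde U_j=K_j\tilde P_j$ and $\tilde Y_j=K_j\tilde P_jK_j^\top$.
\end{itemize}
Then \eqref{eq:dt_cvx_covariance} holds, and the block matrix equals $\begin{bmatrix} I\\ K_j\end{bmatrix}\tilde P_j\begin{bmatrix} I & K_j^\top\end{bmatrix}\succeq 0$. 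Induction gives $\tilde P_j\preceq P_j$, so every constraint still holds, and $\tilde Y_j\preceq Y_j-\Delta_j$. The remaining cost terms do not increase.

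The objective therefore drops by at least $w_u\sum_j\mathrm{tr}(\Delta_j)\Delta\tau_j>0$ whenever some $\Delta_j\neq 0$. Hence every optimizer of the relaxed subproblem is tight. Your fallback is not needed, and the null space of $B_k$ plays no role.

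Finally, $P_0\succ 0$ is an assumption you add; the paper only posits finite covariance. The statement implicitly requires invertible $P_k$ anyway.
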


The discrete minimum control energy objective can then be written as $J_u=\sum_{k=0}^{K-1}[\bar{u}_k^\top\bar{u}_k + \mathrm{tr}(Y_k)]\Delta\tau_k$. The discrete stochastic safety constraint for the $m$-th constraint at the $k$-th time is
\[
\mathbb{P}[a_m^\top x_k \leq b_m] \geq 1 - \varepsilon_{m,k}
\]
where $0\leq\varepsilon_{m,k}<0.5$ is the prescribed probability of violating the safety constraint. The nonconvex safety constraint is written as a difference of convex (DC) functions and the convex-concave procedure (CCP) is used to linearize the nonconvex term yielding the convex inequality constraint in Lemma \ref{lem:convex_path_constraints} \cite{Lipp_Boyd_2016_ConvexConcaveProcedureExtensions, Yuille_Rangarajan_2003_ConvexConcaveProcedure}.
\begin{lemma}[Convex safety constraints \cite{Pilipovsky_Tsiotras_2024_EfficientCovarianceControlwithOutput}]
\label{lem:convex_path_constraints}
The hyperplane safety constraint is approximated as the convex inequality constraint 
\begin{equation}
\label{eq:cvx_state_constraint}
    \begin{aligned}
        a_m^\top \mu_k - b_m &\leq 0, \\
        \Psi^2 a_m^\top P_k a_m - \left[ c_{m,k}^2 - 2 a_m^\top c_{m,k} \left(\mu_k - {x}_k^{\text{ref}}\right) \right] &\leq \eta_{m,k},
    \end{aligned}
\end{equation}
using sequential convex optimization. Note $c_{m,k} = b_m - a_m^\top{x}_k^{\text{ref}}$ and $\Psi$ is the inverse cumulative distribution function of a \textit{standard} Gaussian distribution at the confidence level $1-\varepsilon_{m,k}$. The slack variable $\eta_{m,k}\in\mathbb{R}_{+}$ is added to address \textit{artificial unboundedness} in the linearization step of the CCP \cite{Malyuta_Reynolds_2022_TrajTutorial}.
\end{lemma}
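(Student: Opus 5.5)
The plan is to reduce the chance constraint to a deterministic second-order-cone condition, rewrite it as a difference of convex functions, and then linearize the concave part about the reference trajectory; that linearization is what Lemma~\ref{lem:convex_path_constraints} records. Under the surrogate's Gaussian assumption, $x_k \sim \mathcal{N}(\mu_k,P_k)$. Hence $a_m^\top x_k$ is scalar Gaussian with mean $a_m^\top\mu_k$ and variance $a_m^\top P_k a_m$. The constraint $\mathbb{P}[a_m^\top x_k\le b_m]\ge 1-\varepsilon_{m,k}$ is therefore exactly equivalent to
\[
\Psi\sqrt{a_m^\top P_k a_m} \le b_m - a_m^\top \mu_k .
\]
Since $\varepsilon_{m,k}<0.5$, we have $\Psi\ge 0$, so the left side is nonnegative. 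The condition then splits into two parts: $a_m^\top\mu_k - b_m\le 0$, which is the first line of \eqref{eq:cvx_state_constraint}, and the squared condition $\Psi^2 a_m^\top P_k a_m \le (b_m - a_m^\top\mu_k)^2$. The two together are equivalent to the original, because the sign condition is what allows squaring without losing equivalence.

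The squared condition is in DC form: $\Psi^2 a_m^\top P_k a_m$ is linear in $P_k$, hence convex, while the right-hand side is a convex quadratic in $\mu_k$, which makes the constraint nonconvex. Following the convex-concave procedure, I will replace $h(\mu):=(b_m-a_m^\top\mu)^2$ by its first-order Taylor expansion at $\mu = x_k^{\text{ref}}$. Writing $c_{m,k}=b_m-a_m^\top x_k^{\text{ref}}$, the gradient is $\nabla h(x_k^{\text{ref}}) = -2c_{m,k}a_m$, which gives
\[
h(\mu_k)\approx c_{m,k}^2 - 2c_{m,k}\,a_m^\top(\mu_k - x_k^{\text{ref}}).
\]
Substituting this gives an affine-in-$(\mu_k,P_k)$ inequality, which is convex. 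Adding the nonnegative slack $\eta_{m,k}$ to guard against artificial unboundedness yields the second line of \eqref{eq:cvx_state_constraint}. The paper writes $a_m^\top c_{m,k}$, which should be read as the scalar product $c_{m,k}a_m^\top$.

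The key point is the direction of the approximation. Because $h$ is convex, its linearization underestimates it: $h(\mu)\ge h(x^{\text{ref}})+\nabla h(x^{\text{ref}})^\top(\mu-x^{\text{ref}})$. So with $\eta_{m,k}=0$, any $(\mu_k,P_k)$ satisfying the convexified constraint also satisfies the exact squared constraint. The CCP approximation is therefore conservative, and iterating it with updated references gives the standard monotone sequential convex scheme. The main subtlety is the squaring step, which is valid only because the first inequality enforces nonnegativity of $b_m-a_m^\top\mu_k$. The other is that the equivalence to the chance constraint holds only under the surrogate's Gaussian assumption. That is why the lemma is stated as an approximation and not as a certificate for the simulator.
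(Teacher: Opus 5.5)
Your derivation is correct and follows the route the paper indicates. You reduce the Gaussian chance constraint to $\Psi\sqrt{a_m^\top P_k a_m}\le b_m-a_m^\top\mu_k$, split it into the sign condition and the squared difference-of-convex inequality, and linearize $(b_m-a_m^\top\mu_k)^2$ at $x_k^{\mathrm{ref}}$ via the convex--concave procedure, which reproduces both inequalities of the lemma. The paper gives no proof of its own beyond naming this DC/CCP step and citing Pilipovsky--Tsiotras, so your write-up fills in the omitted details, including the correct observation that the linearization is conservative when $\eta_{m,k}=0$.
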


The discrete stochastic control constraint is given by
\begin{equation}
\label{eq:noncvx_control_constraint}
\mathbb{P}[\|u_k\|_2 \leq u_{\text{max}}] \geq 1 - \varepsilon_u
\end{equation}
where $\varepsilon_u$ is the prescribed probability of violating the control bound. 

\begin{lemma}[Convex control constraints \cite{Oguri_2024_ChanceConstrainedSpacecraftControl, Kumagai_Oguri_2025_RobustCislunarTrajOptSCS}]
\label{lem:convex_control_constraints}
A sufficient condition for the stochastic control constraint in Eq.~\eqref{eq:noncvx_control_constraint} is
\[
\|\bar{u}_k\|_2 + \sqrt{\lambda_{\text{max}}(Y_k)} \sqrt{\chi^2_{n_u,1 - \varepsilon_u}} \leq u_{\text{max}}
\]
implemented in the sequential convex optimization problem by linearizing $\sqrt{\lambda_{\text{max}}(Y_k)}$ as 
\begin{subequations}
\label{eq:cvx_control_constraint}
\begin{align}
\|\bar{u}_k\|_2 + \zeta_k \sqrt{\chi^2_{n_u,1 - \varepsilon_u}} &\leq u_{\text{max}} \\
{\lambda_{\text{max}}(Y_k)} - (\zeta_k^{\text{ref}})^2 - 2\zeta_k^{\text{ref}}(\zeta_k - \zeta_k^{\text{ref}}) &\leq  \nu_{u, k} 
\end{align}
\end{subequations}
where $\chi^2$ is the inverse chi-squared distribution with $n_u$ degrees of freedom and confidence interval $1 - \varepsilon_u$. The slack variable $\nu_{u, k}\in\mathbb{R}_+$ addresses \textit{artificial unboundedness} in linearization about the reference $\zeta^{\text{ref}}:=\{\zeta_k^{\mathrm{ref}}\}_{k=0}^{K-1}$ \cite{Malyuta_Reynolds_2022_TrajTutorial}.
\end{lemma}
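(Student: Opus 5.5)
The plan is to prove the two parts of Lemma~\ref{lem:convex_control_constraints} separately. First, under the affine law~\eqref{eq:control_law} the constraint~\eqref{eq:noncvx_control_constraint} is implied by the stated deterministic inequality. Second, the linearized pair~\eqref{eq:cvx_control_constraint} reduces to that inequality when the slack vanishes.

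\textbf{Part 1: sufficiency.} Under the surrogate's Gaussian assumption, $x_k \sim \mathcal{N}(\mu_k,P_k)$. Hence $u_k = \bar u_k + K_k(x_k-\mu_k)$ is Gaussian with mean $\bar u_k$ and covariance $K_kP_kK_k^\top$. Using $U_k=K_kP_k$, this covariance equals $U_kP_k^{-1}U_k^\top = Y_k$, the variable of Lemma~\ref{lem:convex_state_covariance}. If the LMI there is only an inequality ($Y_k\succeq U_kP_k^{-1}U_k^\top$ by Schur complement), the true covariance is dominated by $Y_k$, which suffices below.

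The steps are then as follows.
\begin{enumerate}
\item Write $u_k-\bar u_k = Y_k^{1/2}\xi$ with $\xi\sim\mathcal{N}(0,I_{n_u})$, so that $\|\xi\|_2^2\sim\chi^2_{n_u}$.
\item Take $r=\sqrt{\chi^2_{n_u,1-\varepsilon_u}}$. Then $\mathbb{P}[\|\xi\|_2\le r]=1-\varepsilon_u$.
\item On that event, the triangle inequality and the operator norm give
\[
\|u_k\|_2 \le \|\bar u_k\|_2 + \|Y_k^{1/2}\|_2\,\|\xi\|_2 \le \|\bar u_k\|_2 + \sqrt{\lambda_{\max}(Y_k)}\,r \le u_{\max}.
\]
So the event $\{\|u_k\|_2\le u_{\max}\}$ contains an event of probability $1-\varepsilon_u$.
\item In the dominated case, replace $Y_k^{1/2}$ by the square root of the true covariance. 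Its spectral norm is at most $\sqrt{\lambda_{\max}(Y_k)}$, so the same bound holds.
\end{enumerate}

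\textbf{Part 2: the linearization.} The first line of~\eqref{eq:cvx_control_constraint} is a second-order cone constraint in $(\bar u_k,\zeta_k)$. The second line is convex because $\lambda_{\max}(Y_k)$ is convex in $Y_k$ (it is SDP-representable as $Y_k\preceq tI$), and the subtracted term is affine in $\zeta_k$.

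To recover the inequality of Part 1, I use the concavity of $\zeta\mapsto\zeta^2$ being replaced by its tangent: since $\zeta^2$ is convex, its first-order expansion underestimates it,
\[
(\zeta_k^{\text{ref}})^2 + 2\zeta_k^{\text{ref}}(\zeta_k-\zeta_k^{\text{ref}}) \le \zeta_k^2 .
\]
Therefore, when $\nu_{u,k}=0$ (the slack is driven to zero at convergence of the sequential scheme), the second line gives $\lambda_{\max}(Y_k)\le \zeta_k^2$. With $\zeta_k\ge0$, this yields $\sqrt{\lambda_{\max}(Y_k)}\le\zeta_k$. Substituting into the first line gives the Part~1 inequality, so the linearized constraint is conservative at every iterate with zero slack. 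I will state the nonnegativity of $\zeta_k$ explicitly, or note that it can be imposed.

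\textbf{Main obstacle.} The delicate point is rigor about Gaussianity. The inequality in Part 1 is exact only for the linear surrogate. The sufficiency claim must therefore be phrased relative to the surrogate model, consistent with the paper's remark that the surrogate cannot itself certify the simulator. A secondary point is that the guarantee of Part 2 requires $\nu_{u,k}=0$, which holds only at a converged, slack-free solution.
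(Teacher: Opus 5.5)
Your proof is correct and follows the standard argument behind the cited works; the paper itself gives no proof of this lemma, only citations. In that argument, $u_k$ is Gaussian with mean $\bar u_k$ and covariance $U_kP_k^{-1}U_k^\top\preceq Y_k$, the $\chi^2_{n_u}$ confidence ball with the triangle inequality and $\|Y_k^{1/2}\|_2=\sqrt{\lambda_{\max}(Y_k)}$ gives sufficiency, and the tangent underestimate of the convex map $\zeta\mapsto\zeta^2$ gives $\lambda_{\max}(Y_k)\le\zeta_k^2$ at zero slack.

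On your sign caveat: with $\nu_{u,k}=0$ the second constraint gives $2\zeta_k^{\text{ref}}\zeta_k\ge(\zeta_k^{\text{ref}})^2+\lambda_{\max}(Y_k)\ge(\zeta_k^{\text{ref}})^2$, so whenever $\zeta_k^{\text{ref}}>0$ (as in the appendix initialization) it already forces $\zeta_k\ge\zeta_k^{\text{ref}}/2>0$, and nonnegativity need not be imposed separately.
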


The slack variables associated with the safety and control constraints are penalized in the cost function as $J_{c} = \sum_{k=0}^{K}\sum_{m=1}^M\|\eta_{m,k}\|_1 + \sum_{k=0}^{K-1}\|\nu_{u,k}\|_1$. 
A soft penalized trust region is also included in the cost function as $J_{tr} = \sum_{k=0}^{K}\|x_k-x_k^{ref}\|_2^2 + \sum_{k=0}^{K-1}(\|u_k-u_k^{ref}\|_2^2 + \|\zeta_k-\zeta_k^{ref}\|_2^2)$ to prioritize linearization validity. 
For consistency, linearization and discretization is considered with respect to the complete set of reference trajectories given by $\xi = \{x^{\mathrm{ref}}, u^{\mathrm{ref}}, \zeta^{\mathrm{ref}}\}$.

Next, we discuss the initial and terminal constraints for the surrogate problem.
\begin{lemma}[Gaussian terminal chance constraint]
\label{lem:gaussian_terminal_covariance_surrogate}
Let \(x(t_f)\in\mathbb{R}^{n_x}\) be a Gaussian random vector with
\(
x(t_f)\sim \mathcal{N}(\mu_{t_f},P_{t_f}),
\)
and let
\(
\mathcal{X}_{t_f}
=
\left\{
x \in X
\;\middle|\;
(x-\mu_{t_f})^\top \Sigma_{t_f}^{-1} (x-\mu_{t_f}) \le r_{t_f}^2
\right\}
\)
be the terminal ellipsoid, where \(\Sigma_{t_f}\in\mathbb{S}_{++}^{n_x}\) and \(r_{t_f}>0\).
Fix \(\varepsilon_p\in(0,1)\), and let \(\chi^2_{n_x,1-\varepsilon_p}\) denote the
\((1-\varepsilon_p)\)-quantile of the chi-square distribution with \(n_x\) degrees of freedom.
If
\begin{equation}
\label{eq:terminal_covariance_surrogate_bound}
P_{t_f}
\preceq
\frac{r_{t_f}^2}{\chi^2_{n_x,1-\varepsilon_p}}\Sigma_{t_f},
\end{equation}
then
\[
\mathbb{P}\left(x(t_f)\notin \mathcal{X}_{t_f}\right)\le \varepsilon_p.
\]
\end{lemma}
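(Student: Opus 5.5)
Because \(x(t_f)\) is Gaussian and centered at the same mean \(\mu_{t_f}\) as the terminal ellipsoid, the statement reduces to comparing two quadratic forms. I would reduce the event \(x(t_f)\notin\mathcal{X}_{t_f}\) to a chi-square tail event through a matrix inequality.

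Write \(y := x(t_f)-\mu_{t_f}\sim\mathcal{N}(0,P_{t_f})\) and set \(\alpha := r_{t_f}^2/\chi^2_{n_x,1-\varepsilon_p}>0\). The hypothesis \eqref{eq:terminal_covariance_surrogate_bound} reads \(P_{t_f}\preceq \alpha\Sigma_{t_f}\). Since \(\Sigma_{t_f}\succ 0\), the map \(M\mapsto \Sigma_{t_f}^{-1/2}M\Sigma_{t_f}^{-1/2}\) preserves the Loewner order, so
\[
\tilde P := \Sigma_{t_f}^{-1/2}P_{t_f}\Sigma_{t_f}^{-1/2}\preceq \alpha I.
\]
Next put \(v := \Sigma_{t_f}^{-1/2}y\sim\mathcal{N}(0,\tilde P)\). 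Then \(y^\top\Sigma_{t_f}^{-1}y = \|v\|_2^2\), and we may write \(v=\tilde P^{1/2}g\) with \(g\sim\mathcal{N}(0,I_{n_x})\). This gives
\[
\|v\|_2^2 = g^\top \tilde P g \le \lambda_{\max}(\tilde P)\,\|g\|_2^2 \le \alpha\|g\|_2^2,
\]
where \(\|g\|_2^2\sim\chi^2_{n_x}\). Therefore
\[
\mathbb{P}\bigl(x(t_f)\notin\mathcal{X}_{t_f}\bigr)=\mathbb{P}\bigl(\|v\|_2^2>r_{t_f}^2\bigr)\le \mathbb{P}\bigl(\alpha\|g\|_2^2>r_{t_f}^2\bigr)=\mathbb{P}\bigl(\|g\|_2^2>\chi^2_{n_x,1-\varepsilon_p}\bigr)=\varepsilon_p,
\]
where the last equality is the definition of the chi-square quantile and uses continuity of the distribution.

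The main subtlety is that \(P_{t_f}\) may be singular, so \(P_{t_f}^{-1}\) should be avoided. Using the square root \(\tilde P^{1/2}\) with a standard normal \(g\) handles this, because any \(\mathcal{N}(0,\tilde P)\) vector has that representation even when \(\tilde P\) is singular. A second point is the set \(X\) in the definition of \(\mathcal{X}_{t_f}\). I would either take \(X=\mathbb{R}^{n_x}\) in the Gaussian setting, or note that the statement implicitly concerns only the ellipsoidal condition, and state this assumption explicitly.
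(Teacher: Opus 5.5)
Your proof is correct and takes essentially the same route as the paper's: a congruence by $\Sigma_{t_f}^{-1/2}$ gives $\Sigma_{t_f}^{-1/2}P_{t_f}\Sigma_{t_f}^{-1/2}\preceq\alpha I$, and the quadratic form is then dominated by $\alpha$ times a $\chi^2_{n_x}$ variable. The only differences are that whitening by $\Sigma_{t_f}^{-1/2}$ before taking a square root spares you the paper's step that $P_{t_f}^{1/2}\Sigma_{t_f}^{-1}P_{t_f}^{1/2}$ and $\Sigma_{t_f}^{-1/2}P_{t_f}\Sigma_{t_f}^{-1/2}$ share their nonzero eigenvalues, and that your remark that the ellipsoid must lie in $X$ (e.g.\ $X=\mathbb{R}^{n_x}$) makes explicit an assumption the paper's proof uses without stating it.
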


The proof of Lemma~\ref{lem:gaussian_terminal_covariance_surrogate} is provided in~Appendix~\ref{proof:lemma_4}.
Given the initial distribution of the Simulator~\ref{def:simulator} (i.e., $x_0 \sim \mu_0)$, and the terminal ellipsoid specified by $\Sigma_{t_f}$ and $r_{t_f}$ by Def.~\ref{def:ra_phi}, the initial and final distributions of the surrogate are constrained by:
\begin{equation}
\label{eq:boundary_conditions}
    \begin{aligned}
    \mu_{k=0} &= \mathbb{E}[x_0], \quad P_{k=0} = \mathrm{Cov}_{\mu_0}(x_0), \\
    \mu_{k=K} &= \mu_{t_f}, \quad P_{k=K} \preceq P_{t_f},
    \end{aligned}
\end{equation}
where $P_{t_f}$ is computed according to Lemma~\ref{lem:gaussian_terminal_covariance_surrogate} for any given terminal risk $\varepsilon_p \in (0,1)$.

In summary, the safety, control, and terminal distribution risks are associated to the target violation probability of Prob.~\ref{prob:control_design} by: 
\begin{equation}\label{eq:chance_associate}
\varepsilon_x + \varepsilon_u +\varepsilon_p \leq \varepsilon,\quad \sum_{k=0}^{K}\sum_{m=1}^{M}\varepsilon_{m,k} \leq \varepsilon_x,
\end{equation}
where the safety constraint risk is prescribed conservatively over the summation of $K$ time points and $M$ affine-constraint. 
Below, we summarize the configuration parameters of covariance steering.

\begin{definition}[Configuration parameters of the CS problem]\label{def:config_cs}
    Let $ \Theta = \mathbb{R}^{n_x M} \times \mathbb{R}^{M} \times \mathbb{R}_{++} \times \mathbb{S}_{++}^{n_x} \times \mathbb{R}_{++} \times \mathbb{R}_{++} \times \mathbb{R}_{++}$ be the configuration parameters space of the CS problem.
    We define the
\emph{initial configuration}~$\theta^{(0)} \in \Theta$ by associating
each component of~$\theta$ with Prob.~\ref{prob:control_design} under the reach-avoid
specification~$\varphi$ in Def.~\ref{def:ra_phi}:
    \[
        \theta^{(0)} \triangleq (\{a_m\}_{m=1}^M, \{b_m\}_{m=1}^M, u_{\max}, P_{t_f}, \varepsilon_{m,k}, \varepsilon_u, \varepsilon_p ) \in \Theta,
    \]
    where $\varepsilon_{m,k},\varepsilon_u$ and $\varepsilon_p$ are associated to Prob.~\ref{prob:control_design} by Eq.~\eqref{eq:chance_associate}, and $P_{t_f}$ is associated to Prob.~\ref{prob:control_design} by Eq.~\eqref{eq:terminal_covariance_surrogate_bound}.
\end{definition}

The configuration parameters define the surrogate CS problem which is decomposed into two phases: (i) the complete CS problem in Alg.~\ref{alg:solve_csp} iterates over (ii) the inner convex subproblem in Def.~\ref{def:surrogate_cs_problem} using methods of \textit{successive convexification} \cite{Mao_Szmuk_2019_SCvx}. The \textit{hyper-parameters} are manually selected to drive the virtual control and trust region below a user-defined, small tolerance $\delta_{vc}$ and $\delta_{tr}$ implying dynamic feasiblity and acceptable linearization error \cite{Szmuk_Reynolds_2020_SCvx_RealTime6DoFPDG}. 
\begin{algorithm}[htp!]
    \caption{$\mathrm{Solve\text{-}CSP}(\theta)$}
    \label{alg:solve_csp}
    \begin{algorithmic}[1]
        \REQUIRE Configuration $\theta \in \Theta$,
                 tolerances $\delta_{vc} > 0$, $\delta_{tr} > 0$
        \STATE $i \leftarrow 0$; initialize reference $\xi^{(i)}$
        \REPEAT
            \STATE Linearize/discretize Simulator~\ref{def:simulator} about 
                   $\xi^{(i)}$ $\;\to\;$ $\{A_k, B_k, c_k, E_k, G_k\}$
            \STATE $z^{\star} \leftarrow $ Solve $\mathcal{P}(\xi^{(i)}, \theta)$ (Def.~\ref{def:surrogate_cs_problem}); $\xi^{(i+1)} \leftarrow 
                   \mathbf{z}^{\star}$; $i \leftarrow i+1$
        \UNTIL{$J_{\nu} + J_{c} \leq \delta_{vc}$ 
               \textbf{and} $J_{tr} \leq \delta_{tr}$}
        \RETURN $\left\{\mu_k^{\star},\, \bar{u}_k^{\star},\, P_k^{\star},\, 
                K_k^{\star} = U_k^{\star}(P_k^{\star})^{-1}\right\}_{k=0}^{K}$
    \end{algorithmic}
\end{algorithm}
\begin{definition}[Surrogate CS Subproblem]\label{def:surrogate_cs_problem}
    Fix $\lambda \sim \mu_\Lambda$, and configuration $\theta \in \Theta$. 
    For a given reference 
    $\xi^{(i)} = \{x^{\mathrm{ref}}, u^{\mathrm{ref}}, \zeta^{\mathrm{ref}}\}$, 
    the $i$-th surrogate CS subproblem is
\begin{equation*}\label{eq:surrogate_cs_subproblem}
        \mathcal{P}(\xi^{(i)}, \theta): \quad 
        \min_{\mathbf{z}} \; \mathcal{J}(\mathbf{z}) 
        \quad \mathrm{s.t.} \quad 
        \mathrm{Eqs.}~\eqref{eq:dt_mean},\,\eqref{eq:dt_cvx_covariance},\,
        \eqref{eq:cvx_state_constraint},\,\eqref{eq:cvx_control_constraint},\,
        \eqref{eq:boundary_conditions}
    \end{equation*}
    where $\mathbf{z} = \{\mu_k, \bar{u}_k, U_k, \nu_k, \zeta_k, P_k, Y_k, 
    \eta_{j,k}, \nu_{u,k}\}_{k=0}^{K}$ is the decision variable vector and 
    $\mathcal{J}(\mathbf{z}) = w_u J_u + w_{vc}(J_{\nu} + J_c) + w_{tr} J_{tr}$ 
    with \textit{hyper-parameters} $w_u, w_{vc}, w_{tr} > 0$.
\end{definition}
In Alg.~\ref{alg:solve_csp}, a reference $\xi^{(0)}$ is initialized at Line~1 by interpolation.
At Line~3, Simulator~\ref{def:simulator} is linearized and discretized to generate the system matrices for the convex subproblem (Def.~\ref{def:surrogate_cs_problem}). 
Lines~4-5 solve the convex subproblem, and update the reference $\xi^{(i)}$ until the aforementioned convergence criteria are met.
See Appd.~\ref{appendix:CSP_setup} for additional details.

We note that Alg.~\ref{alg:solve_csp}, denoted by $\mathrm{Solve\text{-}CSP}(\theta)$, inherently 
conceals the non-Gaussian evolution of the initial distribution. Additionally, it does not offer any continuous-time constraint satisfaction guarantees. To overcome these fundamental limitations and deliver provable performance in the nonlinear continuous-time setting, we turn to P2L.

\subsection{Algorithm and Probabilistic Guarantee}
\label{subsec:p2l_cs}

We now describe how P2L uses simulator rollouts to iteratively configure
the surrogate CS problem. The construction relies on three objects: a
dataset of simulator rollouts under the current policy, a violation
measure that ranks the severity of specification violations, and a
configuration algorithm that updates the surrogate problem parameters
based on the most informative rollouts.

Let $Z_N := \{z^{(1)}, \dots, z^{(N)}\}$ denote the set of $N$ i.i.d.\
realizations drawn as in Def.~\ref{def:probability_measures}.

\begin{definition}[Rollout dataset]
\label{def:rollout_dataset}
Given a policy $\pi \in \Pi$ and the realization set $Z_N$, the
\emph{rollout dataset} is
\[
\mathcal{D}_N(\pi)
:=
\Bigl\{
\Sim_{\pi}(z^{(i)})
\Bigr\}_{z^{(i)} \in Z_N}.
\]
\end{definition}

\begin{definition}[Violation measure]
\label{def:violation_measure}
Let $\varphi$ be the specification in
Def.~\ref{def:ra_phi}, and let $\mathcal{T}$ be the
state-control trajectory space in Def.~\ref{def:simulator}. A
\emph{violation measure} with respect to specification~$\varphi$ is a
function $\ell_\varphi \colon \mathcal{T} \to \mathbb{R}$ such that
$\ell_\varphi(\Sim_\pi(z)) \leq 0$ if and only if
$\Sim_\pi(z) \models \varphi$. Larger positive values indicate more
severe violations. Since the specification is fixed throughout, we
write~$\ell$ for~$\ell_\varphi$ when no ambiguity arises.
\end{definition}

\begin{definition}[Configuration algorithm]
\label{def:config_algorithm}
Let $\varphi$ be the specification in
Def.~\ref{def:ra_phi}, let $\theta \in \Theta$ denote the
configuration parameters of the surrogate CS problem in
Def.~\ref{def:surrogate_cs_problem}, and let $\mathcal{T}$ be the
state-control trajectory space in Def.~\ref{def:simulator}. A
\emph{configuration algorithm} with respect to specification~$\varphi$
is a map~$L_\varphi$ that takes the current
configuration~$\theta \in \Theta$ and a finite collection of
state-control trajectories
$\{\Sim_{\pi}(z^{(i)})\}_{z^{(i)} \in S}$ for some
$S \subseteq Z_N$, and returns an updated configuration
$\theta' = L_\varphi(\theta,\,
\{\Sim_{\pi}(z^{(i)})\}_{z^{(i)} \in S}) \in \Theta$. Since the
specification is fixed throughout, we write~$L$ for~$L_\varphi$.
\end{definition}

\begin{algorithm}[htp!]
\caption{Covariance-Steering Pick-to-Learn (CS-P2L)}
\label{alg:cs_p2l}
\begin{algorithmic}[1]
\REQUIRE Simulator $\Sim$, reach-avoid specification $\varphi$ in Def.~\ref{def:ra_phi}, realization set
$Z_N = \{z^{(1)}, \dots, z^{(N)}\}$, initial configuration $\theta^{(0)}$, configuration algorithm $L$,
violation measure $\ell$
\STATE \textbf{Output:} Policy $\pi^{(r)}$, compression set $T^{(r)}$
\STATE $\pi^{(0)} \gets \mathrm{Solve\text{-}CSP}(\theta^{(0)})$,
\quad evaluate $\mathcal{D}_N(\pi^{(0)})$ on $Z_N$
\STATE $T^{(0)} \gets \emptyset$, \; $r \gets 0$
\WHILE{$\exists\; z^{(i)} \in Z_N \setminus T^{(r)}$ such that
$\Sim_{\pi^{(r)}}(z^{(i)}) \not\models \varphi$}
    \STATE $\bar{z} \gets \displaystyle\arg\max_{\substack{z^{(i)} \in Z_N
    \setminus T^{(r)} \\ \Sim_{\pi^{(r)}}(z^{(i)}) \not\models \varphi }} \;
    \ell\bigl(\Sim_{\pi^{(r)}}(z^{(i)})\bigr)$
    \STATE $T^{(r+1)} \gets T^{(r)} \cup \{\bar{z}\}$
    \STATE $\theta^{(r+1)} \gets L\bigl(\theta^{(r)},\,
\{\Sim_{\pi^{(r)}}(z^{(i)})\}_{z^{(i)} \in T^{(r+1)}}\bigr)$
    \STATE $\pi^{(r+1)} \gets \mathrm{Solve\text{-}CSP}(\theta^{(r+1)})$,
    \quad re-evaluate $\mathcal{D}_N(\pi^{(r+1)})$ on $Z_N$
    \STATE $r \gets r + 1$
\ENDWHILE
\RETURN $\pi^{(r)},\; T^{(r)}$
\end{algorithmic}
\end{algorithm}

We are now ready to present the CS-P2L algorithm.
Algorithm~\ref{alg:cs_p2l} proceeds as follows.
Line~2 solves the initial surrogate CS problem ($\theta^{(0)}$ per Def.~\ref{def:config_cs}) to obtain a starting
policy~$\pi^{(0)}$ and evaluates it on all realizations in $Z_N$ to form
the rollout dataset. The compression set~$T^{(0)} \subseteq Z_N$,
initialized to the empty set in Line~3, accumulates the realizations that
drive policy updates. The while-loop (Lines~4--9) iterates as long as any
realization in $Z_N \setminus T^{(r)}$ violates the specification. In
each iteration, Line~5 selects the most severely violating rollout
according to the violation measure and Line~6 adds it to the compression
set. Line~7 invokes the configuration algorithm to update the surrogate
problem parameters, and Line~8 solves the resulting CS problem and
re-evaluates the entire dataset under the new policy using the
\emph{same} realization set~$Z_N$; this is essential for the
compression-based generalization guarantee, which requires that the data
points remain fixed across iterations. The algorithm terminates when
every realization in $Z_N \setminus T^{(r)}$ satisfies the specification,
and returns the final policy~$\pi^{(r)}$ together with the compression
set~$T^{(r)}$.

Next we state the probabilistic certificate provided by
Alg.~\ref{alg:cs_p2l}. The result follows from the
compression-based generalization
theory~\cite[Thm.~1]{paccagnan2025pick}, which we specialize to the
CS-P2L setting.
We first introduce the function needed to state the bound. 
For $i = 0, 1, \dots, N-1$ and confidence parameter $\delta \in (0,1)$,
define $\Psi_{i,\delta} \colon (0,1) \to \mathbb{R}$ by
\begin{equation}
\label{eq:psi_function}
\Psi_{i,\delta}(\varepsilon)
=
\frac{\delta}{N}
\sum_{q=i}^{N-1}
\frac{\binom{q}{i}}{\binom{N}{i}}
(1-\varepsilon)^{-(N-q)}.
\end{equation}

\begin{theorem}[CS-P2L guarantee]
\label{thm:cs_p2l_guarantee}
Let $(\pi^{(r)}, T^{(r)})$ be the output of
Alg.~\ref{alg:cs_p2l}, and let $|T^{(\cdot)}|$ denote the number of elements in $T^{(\cdot)}$. For any $\delta \in (0,1)$, it holds that
\[
\mathbb{P}_{Z_N}\!\left[
p(\pi^{(r)})
\leq \bar{\varepsilon}(|T^{(r)}|, \delta, N)
\right]
\geq 1 - \delta,
\]
where $\mathbb{P}_{Z_N}$ and $\mathbb{P}_z$ are as defined in
Def.~\ref{def:probability_measures}, and
$\bar{\varepsilon}(i, \delta, N)$ is defined as follows: for
$i = 0, 1, \dots, N-1$, $\bar{\varepsilon}(i, \delta, N)$ is the unique
solution to $\Psi_{i,\delta}(\varepsilon) = 1$ in the interval
$[i/N,\, 1]$; for $i = N$, $\bar{\varepsilon}(N, \delta, N) = 1$.
\end{theorem}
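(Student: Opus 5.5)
The plan is to reduce Theorem~\ref{thm:cs_p2l_guarantee} to the compression-based generalization result \cite[Thm.~1]{paccagnan2025pick}. Its hypotheses are two structural properties of the map $Z_N \mapsto (\pi^{(r)}, T^{(r)})$ induced by Alg.~\ref{alg:cs_p2l}. We view Alg.~\ref{alg:cs_p2l} as a P2L instance in which the base learner sends a finite set of realizations to a policy. Concretely, starting from $\theta^{(0)}$, it applies $L$ and then $\mathrm{Solve\text{-}CSP}$. The data points are the realizations $z^{(i)}$. The per-point "loss" is the violation measure $\ell(\Sim_\pi(z))$ of Def.~\ref{def:violation_measure}, and the misclassification event is $\Sim_\pi(z)\not\models\varphi$. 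With this identification, $p(\pi)$ in \eqref{eq:violation_prob} is exactly the risk bounded in \cite{paccagnan2025pick}.

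The first step is to check the \emph{preference/consistency} property. Algorithm~\ref{alg:cs_p2l} terminates after at most $N$ iterations, because each iteration strictly enlarges $T^{(r)}\subseteq Z_N$. At termination, every $z\in Z_N\setminus T^{(r)}$ satisfies $\varphi$ under $\pi^{(r)}$.

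The second step is to show that $T^{(r)}$ is a genuine \emph{compression set}. This means that re-running Alg.~\ref{alg:cs_p2l} on the input $T^{(r)}$ alone returns the same $\pi^{(r)}$. Moreover, adding any further realizations that are consistent with $\pi^{(r)}$ must not change the output. The argument is inductive on $r$. Every quantity that determines $\theta^{(r+1)}$ depends on the data only through the points already placed in $T^{(r+1)}$: the map $L$ sees only $\{\Sim_{\pi^{(r)}}(z)\}_{z\in T^{(r+1)}}$, and $\mathrm{Solve\text{-}CSP}$ is deterministic in $\theta$. The selection in Line~5 is an argmax over violating points, so it picks the same $\bar z$ whether the non-violating extra points are present or absent. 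We need a deterministic tie-breaking rule here, which I would assume (for example, by index). This reproduces the sequence $\theta^{(0)},\theta^{(1)},\dots$ and hence $\pi^{(r)}$.

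The third step invokes \cite[Thm.~1]{paccagnan2025pick}. That theorem applies under the i.i.d.\ assumption of Def.~\ref{def:probability_measures}, which holds because each $z^{(i)}$ is an independent copy of $(x_0,\lambda,w)$. It yields $\mathbb{P}_{Z_N}[p(\pi^{(r)})>\bar\varepsilon(|T^{(r)}|,\delta,N)]\le\delta$, with $\bar\varepsilon$ defined through $\Psi_{i,\delta}$ in \eqref{eq:psi_function}. For this we must confirm that the equation $\Psi_{i,\delta}(\varepsilon)=1$ has a unique root in $[i/N,1]$. Each term $(1-\varepsilon)^{-(N-q)}$ is strictly increasing and diverges as $\varepsilon\to1$, so $\Psi_{i,\delta}$ is continuous and strictly increasing there. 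At $\varepsilon=i/N$ I would check that $\Psi_{i,\delta}\le1$, or else rely on the convention already established in \cite{paccagnan2025pick}. The case $i=N$ is trivial because $p\le1$.

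The main obstacle is the second step. Measurability and determinism of $\mathrm{Solve\text{-}CSP}$, including convergence of the successive convexification, must be assumed. More delicately, the update $L$ in Line~7 uses trajectories $\Sim_{\pi^{(r)}}(z)$ generated under the \emph{current} policy. These are still functions of $T^{(r+1)}$ and of the past iterates alone, so the induction goes through. Still, I would state explicitly that $L$, the solver, and tie-breaking are deterministic functions of their arguments, since this is exactly what makes the output invariant to removing non-compression points.
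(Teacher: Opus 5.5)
Your proposal is correct and takes the same route as the paper. You identify Alg.~\ref{alg:cs_p2l} as an instance of P2L, show that the returned policy is a deterministic function of $\theta^{(0)}$ and the compression set (re-running on $T^{(r)}$ alone reproduces the same run) and that termination leaves every point of $Z_N\setminus T^{(r)}$ satisfying $\varphi$, and then invoke \cite[Thm.~1]{paccagnan2025pick}; your treatment of tie-breaking, solver determinism and the root of $\Psi_{i,\delta}$ is more explicit than the paper's. One small correction: the discarded points need not be non-violating at intermediate iterations, only dominated in the argmax by the selected $\bar z\in T^{(r)}$; and the claim that adding arbitrary new realizations consistent with $\pi^{(r)}$ never changes the output is false (such a point can be the argmax under an earlier $\pi^{(s)}$), but it is not needed, because the bound of \cite[Thm.~1]{paccagnan2025pick} already covers points that change the compression without being violations.
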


\begin{remark}[Conservative baseline]
\label{rem:conservative_baseline}
The degenerate case $L(\theta, S) = \theta$ for all $S \subseteq Z_N$
(i.e., the configuration algorithm never updates the surrogate problem)
reveals the conservative baseline of Alg.~\ref{alg:cs_p2l}. In this
case $\theta^{(0)}$ remains feasible throughout, the policy remains
$\pi^{(r)} = \pi^{(0)}$ at every iteration, and the rollout dataset
$\mathcal{D}_N(\pi^{(0)})$ never changes. The while-loop simply adds
violating realizations to the compression set one by one until
$T := \{z^{(i)} \in Z_N \mid \Sim_{\pi^{(0)}}(z^{(i)}) \not\models \varphi\}$ is
exhausted, and the algorithm terminates normally with output
$(\pi^{(0)}, T)$. Thm.~\ref{thm:cs_p2l_guarantee} then holds with
compression size~$|T|$.
\end{remark}

\subsection{Violation Measure and Configuration Algorithm Designs}
\label{subsec:config_algo}
The design choices of the configuration algorithm and violation measure affect the compression size at termination and hence the tightness of the bound~$\bar{\varepsilon}$.
We now propose a specific instantiation suited to the surrogate CS problem, beginning with the violation measure and then describing the configuration update rules.

The violation measure is instantiated as the total number of violations with
respect to the \emph{original} reach-avoid specification~$\varphi$. Specifically,
\begin{align}
\label{eq:violation_measure_instance}
& \ell\bigl(\Sim_{\pi}(z)\bigr)
=
\sum_{m=1}^M\sum_{j=0}^{J}
\mathbf{1}\{(a_m^{(0)})^\top x(t_j) > b_m^{(0)}\}
\nonumber \\
& +
\sum_{k=0}^{K-1}
\mathbf{1}\{\|u_k\|_2 > u_{\max}^{(0)}\}
+
\mathbf{1}\{x(t_f) \notin \mathcal{X}_{t_f}^{(0)}\},
\end{align}
where the state and control trajectories are those produced by $\Sim_{\pi}(z)$. 
By construction, $\ell \leq 0$ if and only if $\Sim_{\pi}(z) \models \varphi$, satisfying Def.~\ref{def:violation_measure}.

We now turn to the configuration algorithm. 
Recall that $\theta=(\{a_m\}_{m=1}^M, \{b_m\}_{m=1}^M, u_{\max},P_{t_f}, \varepsilon_{j,k}, \varepsilon_u, \varepsilon_p)$ in Def.~\ref{def:config_cs}.
For simplicity, we choose to update only three components: the safety constraint bound $\{b_m\}_{m=1}^M$, the control bound $u_{\max}$, and the terminal covariance $P_{t_f}$.

Let $\{b_m^{(0)}\}_{m=1}^M = \{b_m\}_{m=1}^M$, $u_{\max}^{(0)}=u_{\max}$, and $P_{t_f}^{(0)}=P_{t_f}$ be the initial configuration parameters associated with $\varphi$.
Given the trajectories $\{\Sim_{\pi^{(r)}}(z^{(i)})\}_{z^{(i)} \in T^{(r+1)}}$, the configuration algorithm identifies the violations (safety, control, and terminal constraints) from the three corresponding $\mathrm{bool}$ logic in Def.~\ref{def:ra_phi}, and updates $\{b_m^{(r)}\}_{m=1}^M$, $u_{\max}^{(r)}$, and $P_{t_f}^{(r)}$ independently.
For safety constraints, let
\begin{equation}
    c_m = \sum_{z^{(i)} \in T^{(r+1)}}
    \bigl|\{j \in \{0,\dots,J\} : a_m^\top x^{(i)}(t_j) > b_m^{(r)}\}\bigr|
\end{equation}
count the state violation time steps across all compression set trajectories.\footnote{We assume $b_m^{(0)} \neq 0$ without loss of generality; this can always be achieved by a coordinate translation.} 
The three update rules are:
\begin{subequations}\label{eq:updates}
\begin{align}
  b_m^{(r+1)} &\gets
    \max\!\Bigl(
      b_m^{(r)}\bigl(1 - \min(\gamma_b \, c_m,\; \bar{\gamma}_b)\bigr),\;
      b_m^{\min}
    \Bigr),
  \label{eq:path_update} \\
  u_{\max}^{(r+1)} &\gets
    \max\!\bigl(\gamma_u \, u_{\max}^{(r)},\;
    u_{\max}^{\min}\bigr),
  \label{eq:control_update} \\
  P_{t_f}^{(r+1)} &\gets
    \min\!\bigl(
      \gamma_P \, P_{t_f}^{(r)},\;
      P_{t_f}^{\max}
    \bigr),
  \label{eq:terminal_update}
\end{align}
\end{subequations}
where~\eqref{eq:path_update} tightens the bound on $m-$th affine constraints,
with per-violation fraction~$\gamma_b > 0$, capped at rate~$\bar{\gamma}_b \in (0,1)$;
\eqref{eq:control_update} shrinks the control bound with factor~$\gamma_u \in (0,1)$;
and~\eqref{eq:terminal_update} contracts the terminal covariance with factor~$\gamma_P \in (0,1)$.
Each rule is applied only when the corresponding violations are present in~$T$.

The boundary terms $b_m^{\min}$, $u_{\max}^{\min}$, and $P_{t_f}^{\max}$ ensure that the updated configuration never renders $\mathrm{Solve\text{-}CSP}$ infeasible, and are determined in practice by running Alg.~\ref{alg:solve_csp} before Alg.~\ref{alg:cs_p2l} to identify the tightest parameters for which Alg.~\ref{alg:solve_csp} remains feasible.

\subsection{Tightening Strategy for Violation Probability Bound}
\label{subsec:tightening_strategy}

The configuration algorithm design in
Sec.~\ref{subsec:config_algo} determines the compression size for a
fixed dataset. 
We now address the complementary direction: enlarging the dataset to drive~$\bar{\varepsilon}$ below the
target~$\varepsilon$.

A key constraint is that
Thm.~\ref{thm:cs_p2l_guarantee} requires the configuration
algorithm~$L$, including all tightening factors, to be fixed
independently of the realization set used for certification.
Tuning~$L$ on the same data that enters the bound invalidates the
guarantee unless a union-bound correction is
applied; see~\cite[Thm.~4]{campi2023compression}
and~\cite[Footnote~11]{paccagnan2023pick}. The following two-phase
procedure respects this independence requirement.

In the optional \emph{calibration phase}, if a suitable~$L$ is not
known in advance, an initial batch~$Z^{(0)}$ is generated and CS-P2L
is run with various tightening factors to identify a satisfactory configuration algorithm, which is then frozen as the fixed~$L^\star$. This
mirrors the pretraining strategy employed in the original P2L
experiments~\cite[Sec.~5]{paccagnan2023pick}. If domain knowledge
already suggests a good~$L^\star$, this phase may be skipped entirely.

In the \emph{certification phase}, independent
batches~$Z^{(s)}$ are generated (each with a distinct random seed, so that all realizations are i.i.d.\ draws from~$\mathbb{P}_z$) and CS-P2L is run with the frozen~$L^\star$ on
the cumulative dataset
$Z^{(1)} \cup Z^{(2)} \cup \cdots$, enlarging it until the bound is
acceptable. The guarantee of Thm.~\ref{thm:cs_p2l_guarantee} holds
at each stage because~$L^\star$ was fixed before any
of~$Z^{(1)}, Z^{(2)}, \dots$ were observed, and the tuning
batch~$Z^{(0)}$ (if used) is excluded from the certification data.
Each stage's guarantee is self-contained; one may therefore stop as
soon as $\bar{\varepsilon} \leq \varepsilon$.\footnote{Selecting the best bound across~$S$ stages,
however, requires a union-bound correction
$\delta \to \delta / S$.}

We formalize this procedure in Alg.~\ref{alg:tightening_strategy},
which solves Prob.~\ref{prob:control_design} for a target violation
probability~$\varepsilon$ when the returned flag
$\mathrm{SAT} = \text{True}$.

\begin{algorithm}[htb!]
\caption{Tightening Strategy for CS-P2L}
\label{alg:tightening_strategy}
\begin{algorithmic}[1]

\REQUIRE $\Sim$, $\varphi$, $\varepsilon$, $\delta$, $\theta^{(0)}$, $\ell$, and $s_{\max}$

\STATE \textbf{Output:} $(\pi, \bar{\varepsilon}, \mathrm{SAT})$

\STATE \textbf{(Calibration phase — optional)}
\IF{no suitable $L$ is known}
    \STATE Sample $Z^{(0)} \sim \mathbb{P}_z$ using a random seed $0$
    \STATE Tune and fix $L^\star$ via CS-P2L on $Z^{(0)}$
\ELSE \STATE $L^\star \gets L$ 
\ENDIF

\STATE \textbf{(Certification phase)} $Z_{\mathrm{cum}} \gets \emptyset$

\FOR{$s = 1$ to $s_{\max}$}
    \STATE Generate $Z^{(s)} \sim \mathbb{P}_z$ using random seed $s$
    \STATE $Z_{\mathrm{cum}} \gets Z_{\mathrm{cum}} \cup Z^{(s)}$
    \STATE $(\pi^{(s)}, T^{(s)}) \gets$ CS-P2L$(\Sim,Z_{\mathrm{cum}},\theta^{(0)},L^{\star},\ell)$
    \STATE Compute $\bar{\varepsilon}$ via Thm.~\ref{thm:cs_p2l_guarantee}
    \IF{$\bar{\varepsilon} \le \varepsilon$}
        \STATE \textbf{return} $(\pi^{(s)}, \bar{\varepsilon}, \mathrm{SAT}=\text{True})$
    \ENDIF
\ENDFOR

\STATE \textbf{return} $(\pi^{(s)}, \bar{\varepsilon}, \mathrm{SAT}=\text{False})$

\end{algorithmic}
\end{algorithm}

\begin{remark}
    Although this work focuses on covariance steering (CS), we note that one can replace the inner CS solver by other uncertainty-aware control methods, then follow the discussed process of (i) associate the inner solver to Prob.~\ref{prob:control_design}; (ii) identify the configuration parameters (Def.~\ref{def:config_cs}); and (iii) design the configuration algorithm (Def.~\ref{def:config_algorithm}).
\end{remark}
\section{Experiment}
We demonstrate CS-P2L on a spacecraft powered-descent simulator; the detailed setup is provided in Appendix~\ref{appendix:dynamics}.
We run Alg.~\ref{alg:tightening_strategy} with $N=100$,
confidence $\delta=10^{-3}$, and target violation
probability~$\varepsilon=0.05$. 
The inner CS solver converges to the given $10^{-6}$ tolerance (see Appendix~\ref{appendix:CSP_setup} for more details) at each stage $s$.
At $s=6$, the algorithm terminates,
returning a controller~$\pi_{\mathrm{CS\text{-}P2L}}$ using $600$
realizations with $\bar{\varepsilon}=0.049$; for comparison, we also
record the initial CS controller~$\pi_{\mathrm{CS}}$ (with an associated total $0.05$ chance constraint) and all intermediate CS-P2L controllers $\pi^{(s)}$ computed at Line~14 of Alg.~\ref{alg:tightening_strategy}.

To validate the results, we draw $1000$
realizations~$Z_{1000}^{\mathrm{val}}$ and evaluate each controller on
the same set to obtain empirical violation probabilities, shown in
Fig.~\ref{fig:p2l_alg2}. Three observations stand out. First, the
CS-P2L bounds~$\bar{\varepsilon}$ upper-bound the
empirical violation probability, confirming the theoretical guarantee.
Second, the gap between the empirical $\varepsilon$ and the computed~$\bar{\varepsilon}$ narrows as more realizations are incorporated. 
Third, $\pi_{\mathrm{CS}}$---obtained from the initial surrogate
with a total $0.05$ chance constraint---exhibits an empirical
violation probability of $0.096$, nearly double the prescribed risk and roughly three times that of~$\pi_{\mathrm{CS\text{-}P2L}}$ ($0.031$); the corresponding state and control trajectories are compared in
Figs.~\ref{fig:state_compare}
and~\ref{fig:control_compare}; Fig.~\ref{fig:terminal_compare} compares the terminal state samples, showing that most violations occur in the position state.
\begin{figure}[htb!]
    \centering
    \includegraphics[width=1.0\linewidth]{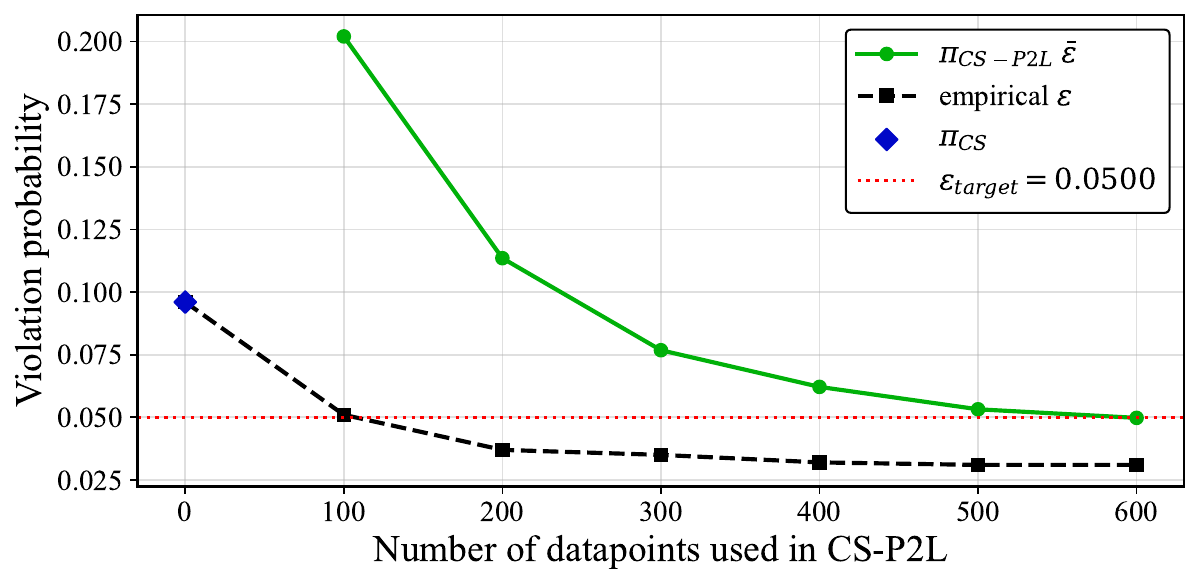}
    \caption{Empirical violation probability and CS-P2L upper bound
    $\bar{\varepsilon}$ vs.\ increased $|Z|$, evaluated on $1000$
    validation realizations.}
    \label{fig:p2l_alg2}
\end{figure}
\begin{figure}[htb!]
    \centering
    \includegraphics[width=1.0\linewidth]{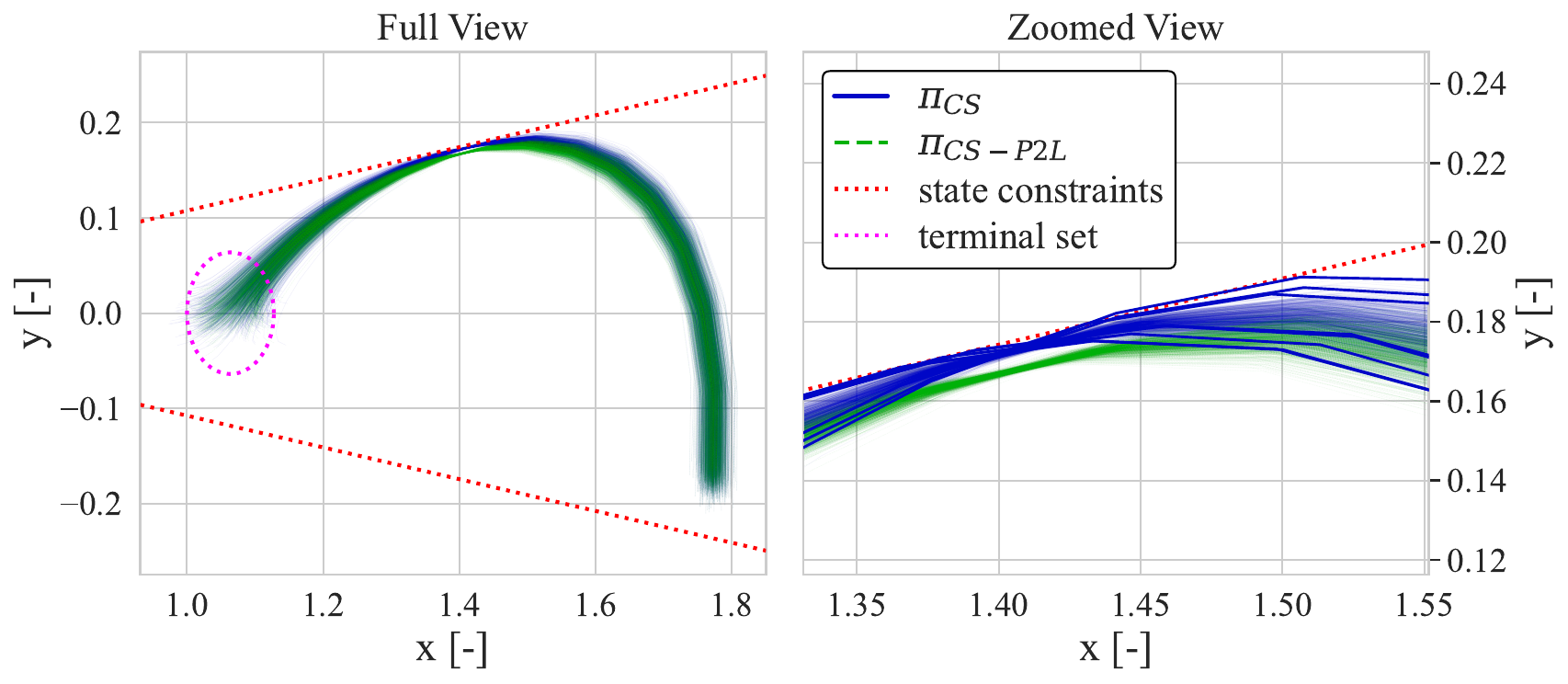}
    \caption{Position trajectories of $\pi_{\mathrm{CS}}$ and $\pi_{\mathrm{CS-P2L}}$ on $1000$ validation realizations; the violating trajectories are highlighted in the zoomed view; note that no trajectories of $\pi_{\mathrm{CS-P2L}}$ violate the state constraints.}
    \label{fig:state_compare}
\end{figure}
\begin{figure}[htb!]
    \centering
    \includegraphics[width=1.0\linewidth]{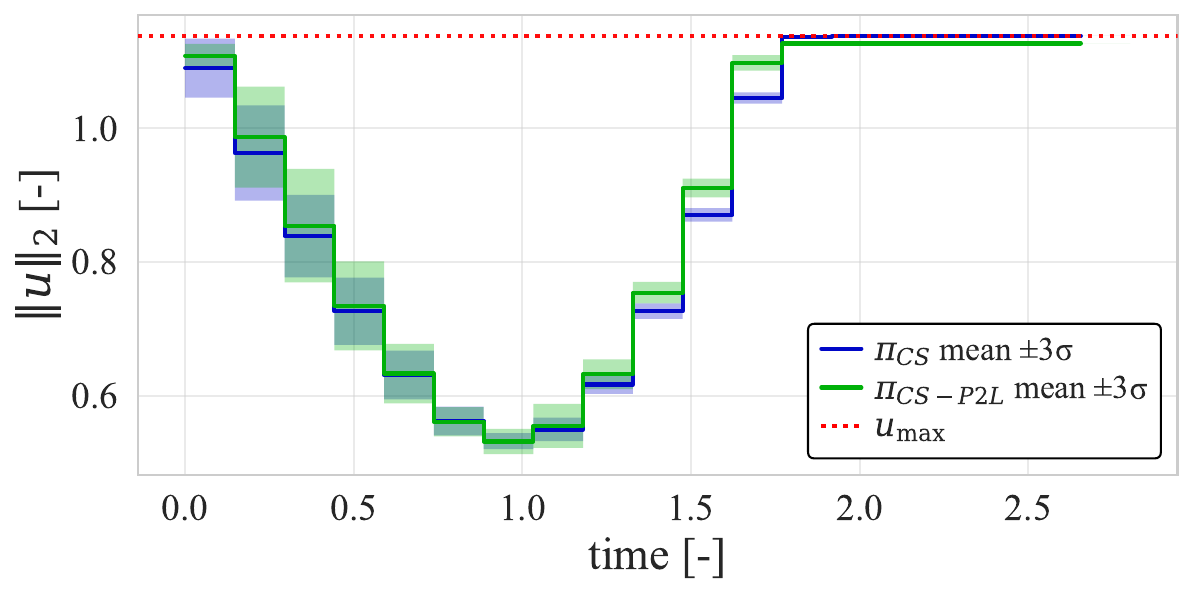}
    \caption{Control (norm) trajectories of $\pi_{\mathrm{CS}}$ and $\pi_{\mathrm{CS-P2L}}$ on $1000$ validation realizations.}
    \label{fig:control_compare}
\end{figure}
\begin{figure}[htb!]
    \centering
    \includegraphics[width=1.0\linewidth]{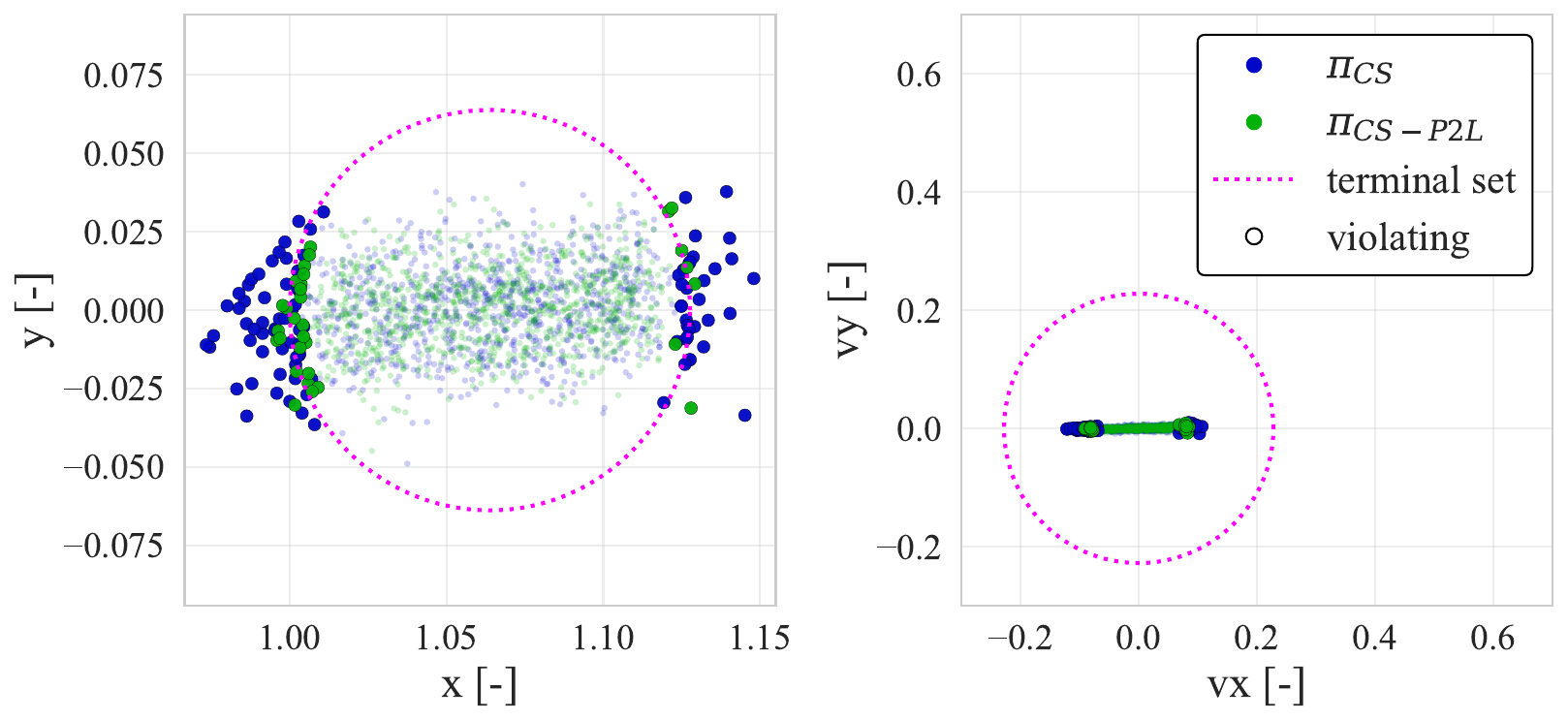}
    \caption{Terminal samples of $\pi_{\mathrm{CS}}$ and $\pi_{\mathrm{CS-P2L}}$ on $1000$ validation realizations.}
    \label{fig:terminal_compare}
\end{figure}





\section*{APPENDIX}
\subsection{Proofs}\label{appendix:proof}
\textbf{Proof of Lemma~\ref{lem:gaussian_terminal_covariance_surrogate}}
\begin{proof}\label{proof:lemma_4}
Let \(z\sim\mathcal N(0,I_{n_x})\). Since
\(
x(t_f)\sim \mathcal N(\mu_{t_f},P_{t_f}),
\)
we may write
\(
x(t_f)-\mu_{t_f}=P_{t_f}^{1/2}z.
\)
Hence
\begin{equation}
\label{eq:terminal_quadratic_form_mahalanobis}
(x(t_f)-\mu_{t_f})^\top \Sigma_{t_f}^{-1}(x(t_f)-\mu_{t_f})
=
z^\top\!\bigl(P_{t_f}^{1/2}\Sigma_{t_f}^{-1}P_{t_f}^{1/2}\bigr)z.
\end{equation}
Define
\(
\alpha \triangleq \frac{r_{t_f}^2}{\chi^2_{n_x,1-\epsilon}}.
\)
By \eqref{eq:terminal_covariance_surrogate_bound},
\(
P_{t_f}\preceq \alpha \Sigma_{t_f}.
\)
Applying the congruence transformation with \(\Sigma_{t_f}^{-1/2}\) gives
\(
\Sigma_{t_f}^{-1/2}P_{t_f}\Sigma_{t_f}^{-1/2}\preceq \alpha I_{n_x}.
\)
Moreover,
\(P_{t_f}^{1/2}\Sigma_{t_f}^{-1}P_{t_f}^{1/2}\) and
\(\Sigma_{t_f}^{-1/2}P_{t_f}\Sigma_{t_f}^{-1/2}\) have the same nonzero eigenvalues; since both matrices are symmetric positive semidefinite, it follows that
\(
P_{t_f}^{1/2}\Sigma_{t_f}^{-1}P_{t_f}^{1/2}\preceq \alpha I_{n_x}.
\)
Substituting this bound into \eqref{eq:terminal_quadratic_form_mahalanobis} yields
\(
(x(t_f)-\mu_{t_f})^\top \Sigma_{t_f}^{-1}(x(t_f)-\mu_{t_f})
\le
\alpha\, z^\top z.
\)
Therefore,
$\mathbb P\!\left(x(t_f)\notin\mathcal X_{t_f}\right)
=
\mathbb P\!\left(
(x(t_f)-\mu_{t_f})^\top \Sigma_{t_f}^{-1}(x(t_f)-\mu_{t_f})>r_{t_f}^2
\right) \le
\mathbb P\!\left(\alpha\, z^\top z>r_{t_f}^2\right).$
Since \(z^\top z\sim \chi^2_{n_x}\),
\(
\mathbb P\!\left(\alpha\, z^\top z>r_{t_f}^2\right)
=
\mathbb P\!\left(
\chi^2_{n_x}>\frac{r_{t_f}^2}{\alpha}
\right)
=
\mathbb P\!\left(
\chi^2_{n_x}>\chi^2_{n_x,1-\epsilon}
\right)
=
\epsilon.
\)
Thus
\(
\mathbb P\!\left(x(t_f)\notin\mathcal X_{t_f}\right)\le \epsilon.
\)
\end{proof}

\textbf{Proof of Theorem~\ref{thm:cs_p2l_guarantee}}
\begin{proof}\label{proof:cs_p2l_guarantee}
Algorithm~\ref{alg:cs_p2l} is an instance of the P2L meta-algorithm
(Alg.~1 of~\cite{paccagnan2025pick}) with the following
identifications: the dataset $D$ corresponds to the realization
set~$Z_N$; the property $\varphi$ and violation measure $\ell$ define the
termination condition and selection rule. It remains to verify that the
policy construction defines a valid synthesis algorithm in the sense
of~\cite{paccagnan2025pick}, i.e., a map from a subset of $Z_N$ to a
policy. At iteration $r+1$, the policy is determined by
$\pi^{(r+1)} = \mathrm{Solve\text{-}CSP}(L(\theta^{(r)}, T^{(r+1)}))$.
Since $\theta^{(0)}$ is fixed and each subsequent $\theta^{(r)}$ is
obtained by applying $L$ along the nested sequence
$T^{(1)} \subset \cdots \subset T^{(r+1)}$, the entire
map from $T^{(r+1)}$ to $\pi^{(r+1)}$ is determined by $T^{(r+1)}$
and~$\theta^{(0)}$. By construction, the algorithm terminates when
$Y_{\pi^{(r)}}(z^{(i)}) = 1$ for all
$z^{(i)} \in Z_N \setminus T^{(r)}$, so that the compression property is
satisfied. The result then follows directly from Theorem~1
of~\cite{paccagnan2025pick}.
\end{proof}

\subsection{Spacecraft Powered-descent Dynamics}\label{appendix:dynamics}
We consider the drift in Simulator~\eqref{def:simulator} as:
$f(x_t,u_t,t; \lambda) = f_0(x_t,t; \lambda) + F(x_t)u_t,$
where $f_0:\mathbb{R}^{n_x}\times\mathbb{R}\mapsto\mathbb{R}^{n_x}$ is the natural dynamics and $F:\mathbb{R}^{n_x}\mapsto\mathbb{R}^{n_x \times 2}$ is a generic influence matrix that maps accelerations to changes in the state space. 
The natural dynamics are described as: $f_0(x_t,t) = f_{\text{kep}}(x_t) + F(x_t) a_{\text{pert}}(x,t),$
where $f_{\text{kep}}:\mathbb{R}^{n_x}\mapsto\mathbb{R}^{n_x}$ is Keplerian motion and $a_{\text{pert}}:\mathbb{R}^{n_x}\times\mathbb{R}\mapsto\mathbb{R}^{2}$ is a perturbing acceleration due to the rotating reference frame, solar radiation pressure,  higher-order gravitational effects, etc. Assuming Keplerian motion in a  non-rotating frame, these functions are
$f_{\text{kep}}(x_t) = \begin{bmatrix}
        v_t \\ 
        -\lambda(\mu / \|r_t\|_2^3) r_t
    \end{bmatrix},F = \begin{bmatrix}
         0_{2\times2}\\
         I_2
     \end{bmatrix},a_{\text{pert}}(x,t) = 0_{2\times 1}$
where $\mu\in\mathbb{R}$ is the gravitational parameter of the central body, $\lambda\in\mathbb{R}$ is the scaling of the gravitational parameter, and $r_t = \|r_t\|_2$.

\subsection{Alg.~\ref{alg:solve_csp} Setup}\label{appendix:CSP_setup}
The reference $\xi^{(0)} = \{x^{\mathrm{ref}}, \bar{u}^{\mathrm{ref}}, \zeta^{\mathrm{ref}}\}$
for Alg.~1 is initialized by the user and updated by the solution from Def.~\ref{def:surrogate_cs_problem}.
The initial mean state reference linearly interpolates between the initial and terminal means: \(
    x^{\mathrm{ref}}_k = (1-\tau_k)\,\mu_{t_0} + \tau_k\,\mu_{t_f},\)
where $\tau_k = k / K$. The feedforward control reference is set to zero, $\bar{u}^{\mathrm{ref}}_k = {0}$. The control covariance slack reference $\zeta^{\mathrm{ref}}_k$ is initialized as the midpoint of the feasible interval
$[0,\,\zeta^{\max}_k]$ such that $\zeta^{\mathrm{ref}}_k = \tfrac{1}{2}\,\zeta^{\max}_k$ and $\zeta^{\max}_k$ is the largest value consistent with the nominal control constraint in~\eqref{eq:cvx_control_constraint}, i.e.,
\(
    \zeta^{\max}_k
    = (u_{\max} - \|\bar{u}^{\mathrm{ref}}_k\|_2) /\sqrt{\chi^2_{n_u,\,1-\varepsilon_u}}.
\)
The fixed hyper-parameters are set as: $w_u=1.0,w_{\mathrm{vc}}=100,w_{\mathrm{tr}}=0.1$ and $\delta_{\mathrm{vc}}=\delta_{\mathrm{tr}}=10^{-6}$.






\bibliographystyle{IEEEtran} 
\bibliography{IEEEtranBST/IEEEabrv,IEEEtranBST/IEEEexample}


\end{document}